\def\ps@IEEEtitlepagestyle{%
  \def\@oddfoot{\mycopyrightnotice}%
  \def\@evenfoot{}%
}
\def\mycopyrightnotice{%
  {\footnotesize 978-1-7281-0407-2/19/\$31.00 ©2019 IEEE\hfill}% <--- Change here
  \gdef\mycopyrightnotice{}% just in case
}
\theoremstyle{plain}
\newtheorem{thm}{Theorem}
\newtheorem{lem}{Lemma}
\newtheorem{cor}{Corollary}
\newtheorem{assumption}{Assumption}
\theoremstyle{definition}
\newtheorem{defn}{Definition}
\theoremstyle{remark}
\newtheorem{rem}{Remark}
\def\BibTeX{{\rm B\kern-.05em{\sc i\kern-.025em b}\kern-.08em
    T\kern-.1667em\lower.7ex\hbox{E}\kern-.125emX}}
\begin{document}

\title{Chance-constrained Unit Commitment via the Scenario Approach
% \thanks{This work is supported in part by Electric Reliability Council of Texas, and in part by NSF Grant ECCS-1839616.}
}

\author{Xinbo Geng, and Le Xie
\thanks{The authors are with the Department of Electrical and Computer
Engineering, Texas A\&M University, College Station, TX, 77843, USA.
(e-mail:xbgeng@tamu.edu; le.xie@tamu.edu). This work is supported in part by Electric Reliability Council of Texas, and in part by NSF Grant ECCS-1839616.}% <-this % stops a space
}

% \author{\IEEEauthorblockN{Xinbo Geng}
% \IEEEauthorblockA{\textit{Department of Electrical and Computer Engineering} \\
% \textit{Texas A\&M University}\\
% College Station, TX, USA\\
% xbgeng@tamu.edu}
% \and
% \IEEEauthorblockN{Le Xie}
% \IEEEauthorblockA{\textit{Department of Electrical and Computer Engineering} \\
% \textit{Texas A\&M University}\\
% College Station, TX, USA\\
% le.xie@tamu.edu}
% }

\maketitle

\begin{abstract}
Keeping the balance between supply and demand is a fundamental task in power system operational planning practices. This task becomes particularly challenging due to the deepening penetration of renewable energy resources, which induces a significant amount of uncertainties. In this paper, we propose a chance-constrained Unit Commitment (c-UC) framework to tackle challenges from uncertainties of renewables. The proposed c-UC framework seeks cost-efficient scheduling of generators while ensuring operation constraints with guaranteed probability. We show that the scenario approach can be used to solve c-UC despite of the non-convexity from binary decision variables. We reveal the salient structural properties of c-UC, which could significantly reduce the sample complexity required by the scenario approach and speed up computation. Case studies are performed on a modified 118-bus system.
\end{abstract}

% \begin{IEEEkeywords}
% Unit commitment, Chance constraint, Scenario approach
% \end{IEEEkeywords}

\section{Introduction} % (fold)
\label{sec:introduction}
Unit commitment (UC) is one of the most important decision making processes in the day-ahead operation of power systems. UC seeks the most cost-efficient on/off decisions and dispatch schedule for the generators, considering various constraints on the generators and system security under contingencies. Consideration of additional constraints such as transmission capacities in UC leads to a more general problem known as Security Constrained Unit Commitment (SCUC). The main focus of this paper is the UC problem \emph{without} transmission constraints. Possible extensions towards SCUC are discussed at the end of this paper.

UC is naturally a decision making problem with uncertainties. Traditionally, UC deals with uncertainties from unexpected events such as device failures as well as load forecast errors. Recently, the growing amount of uncertainties from renewables pose new challenges on the operations of power systems. UC, as a critical part of day-ahead scheduling, needs to be improved to consider the impacts of uncertainties.
% In recent years, with a rapidly increasing amount of renewables in the system, the variability brought about by the renewables poses new challenges on the day-ahead operations of power systems. UC, as a critical part of day-ahead scheduling, needs to be improved to consider the impacts of uncertainties.

Broadly speaking, there are two approaches for decision making under uncertainties: \emph{stochastic} optimization (SO) and \emph{robust} optimization (RO). SO relies on probabilistic models to explain
uncertainties and often optimizes the objective function in the presence of randomness. SO has found many successful applications in power systems. References \cite{takriti_stochastic_1996,wu_stochastic_2007,zheng_stochastic_2015} formulate and solve the \emph{stochastic unit commitment} problem, which typically minimizes expected commitment and dispatch costs in the presence of uncertainties. RO takes an alternative approach, in which the uncertainty model is set-based and deterministic. Recently, researchers in \cite{bertsimas_adaptive_2013} formulated and solved the \emph{robust unit commitment} problem, which minimizes the commitment and dispatch costs for the worst case in a predefined uncertainty set. 

Both approaches attract a lot of attention and are relatively successful in addressing the challenges related with uncertainties. This paper looks at the UC problem through the lens of chance-constrained optimization (CCO), which is closely related with both stochastic and robust optimization \cite{geng_data-driven_2019-2}. The main difference of CCO from SO or RO is the chance constraint (i.e. (\ref{form:cc_opt_function_cc}) and (\ref{form:cc_opt_cc})), which explicitly considers the feasibility of solutions under uncertainties. 

Various formulations of chance-constrained (security-constrained) unit commitment have been proposed, e.g. \cite{ozturk_solution_2004,pozo_chance-constrained_2013,wang_chance-constrained_2012,wang_price-based_2013,zhao_expected_2014,wu_chance-constrained_2014,tan_hybrid_2016,bagheri_data-driven_2017,zhang_chance-constrained_2017-1}. As mentioned in \cite{geng_data-driven_2019-2}, chance-constrained optimization problems can be solved via different methods. We take chance-constrained unit commitment problem as an example. It can be solved using sample average approximation \cite{wang_chance-constrained_2012,wang_price-based_2013,zhao_expected_2014,tan_hybrid_2016,bagheri_data-driven_2017,zhang_chance-constrained_2017-1} or robust optimization based techniques \cite{jiang_robust_2012}.
The scenario approach, which might be the most well-known method to solve chance-constrained optimization, was not \emph{directly} applied on the unit commitment. The only related references we found are \cite{margellos_stochastic_2013,hreinsson_stochastic_2015}, which are built upon a variation of the scenario approach \cite{margellos_road_2014}. The original scenario approach in \cite{campi_exact_2008,campi_scenario_2009} was considered \emph{not applicable} on the unit commitment problem because of the convexity assumption (see Assumption \ref{assu:convexity} in Section \ref{sub:the_scenario_approach}). This paper, however, demonstrates that the original scenario approach is indeed applicable by exploring the structure of the unit commitment problem.

The main contributions of this paper are threefold: (1) we formulate the chance-constrained unit commitment problem, and obtain the optimal solution with rigorous guarantees on the feasibility of the solution; (2) in spite of the non-convexity from commitment decisions, we show that the scenario approach is still applicable on the UC problem; (3) by exploring the structural properties of unit commitment, we greatly reduce the sample complexity required by the scenario approach.

The remainder of this paper is organized as follows. Section \ref{sec:introduction_to_chance_constrained_optimization} introduces chance-constrained optimization and the scenario approach. The deterministic and chance-constrained unit commitment problems are formulated in Section \ref{sec:unit_commitment}. 
Section \ref{sec:scenario_based_unit_commitment} applies the scenario approach on the chance-constrained unit commitment problem and analyzes its structural properties.
Numerical results are in Section \ref{sec:case_study}. Section \ref{sec:concluding_remarks} presents the concluding remarks.

The notations in this paper are standard.
% The set of positive integers is denoted by $\mathbf{Z}_+$.
All vectors are in the real field $\mathbf{R}$. We use $\mathbf{1}$ to denote an all-one vector of appropriate size. The transpose of a vector $a$ is $a^\intercal$.
The element-wise multiplication of the same-size vectors $a$ and $b$ is denoted by $a \circ b$.
Sets are in calligraphy fonts, e.g. $\mathcal{S}$. The cardinality of a set $\mathcal{S}$ is $|\mathcal{S}|$. The Cartesian product of multiple sets is denoted by $\times$, e.g. $\mathcal{U}_1 \times \mathcal{U}_2 \times \cdots \times \mathcal{U}_N$.
% section introduction (end)

\section{Introduction to Chance-constrained Optimization} % (fold)
\label{sec:introduction_to_chance_constrained_optimization}
\subsection{Chance-constrained Optimization} % (fold)
\label{sub:chance_constrained_optimization}
Chance-constrained optimization is a major approach for decision making in uncertain environments.
A typical chance-constrained optimization problem is presented in (\ref{form:cc_opt_function}).
\begin{subequations}
\label{form:cc_opt_function}
\begin{align}
  \min_{x \in \mathbf{R}^n} ~ &  c^\intercal x \\
 % & \text{s.t.} && \overline{Q_{CB}^{(i)}} = \frac{1}{T} \sum_{t=1}^{T} Q_{CB}^{(i)}[t]  & &  \\
  \text{s.t.}~&  \mathbb{P}_\xi \Big( f(x,\xi) \le 0 \Big) \ge 1 - \epsilon \label{form:cc_opt_function_cc} \\
  & g(x) \le 0 \label{form:cc_opt_function_det}
\end{align}
\end{subequations}
We could write (\ref{form:cc_opt_function}) in a more compact form by defining $\mathcal{X}_\xi:= \{x \in \mathbf{R}^n: f(x,\xi) \le 0\}$ and $\chi := \{x \in \mathbf{R}^n: g(x) \le 0 \}$.
\begin{subequations}
\label{form:cc_opt}
\begin{align}
  \min_{x \in \chi}~ &  c^\intercal x  \\
 % & \text{s.t.} && \overline{Q_{CB}^{(i)}} = \frac{1}{T} \sum_{t=1}^{T} Q_{CB}^{(i)}[t]  & &  \\
  \text{s.t.}~&  \mathbb{P}_\xi \Big( x \in \mathcal{X}_\xi \Big) \ge 1 - \epsilon \label{form:cc_opt_cc}
\end{align}
\end{subequations}
Without loss of generality \cite{campi_scenario_2009}, we assume that the objective is a linear function of decision variables $x \in \mathbf{R}^n$. Variables $\xi \in \Xi$ denotes the source of uncertainties and $\Xi$ is the support of the random variable. Deterministic constraints (\ref{form:cc_opt_function_det}) are denoted by set $\chi$ in (\ref{form:cc_opt}). Constraint (\ref{form:cc_opt_function_cc}) or (\ref{form:cc_opt_cc}) is the \emph{chance constraint}. The chance constraint requires the the inner constraint $x \in \mathcal{X}_\xi$ to be satisfied with probability at least $1 - \epsilon$, where the violation probability $\epsilon$ is typically a small number (e.g. $1\%$). The set $\mathcal{X}_\xi$ depends on the realization of $\xi$ and the probability is taken with respect to $\xi$.

Since its birth in 1950s, researchers have proposed many methods to solve chance-constrained optimization problems, e.g. scenario approach, sample average approximation, and convex approximation. A detailed review and tutorial to chance-constrained optimization is in \cite{geng_data-driven_2019-2}.
% subsection chance_constrained_optimization (end)

\subsection{Scenario Approach} % (fold)
\label{sub:the_scenario_approach}
Scenario approach is one of the most well-known methods to solve chance-constrained optimization problems. It has been applied on various power system problems, e.g. economic dispatch \cite{modarresi_scenario-based_2018} and demand response \cite{ming_scenario-based_2017}.
The scenario approach utilizes $N$ independent and identically distributed (i.i.d.) scenarios $\mathcal{N} := \{\xi^1,\xi^2,\cdots,\xi^N\}$ to convert the chance-constrained program (\ref{form:cc_opt_function}) to the \emph{scenario problem} below:
\begin{subequations}
\label{form:scenario_problem_function}
\begin{align}
  \text{(SP)}_{\mathcal{N}}:~\min_{x \in \chi } \quad &  c^\intercal x  \\
 % & \text{s.t.} && \overline{Q_{CB}^{(i)}} = \frac{1}{T} \sum_{t=1}^{T} Q_{CB}^{(i)}[t]  & &  \\
  \text{s.t. } & f(x,\xi^i) \le 0,~ i =1,2,\cdots,N \label{form:scenario_problem_scenario_function_i}  
\end{align}
\end{subequations}
The scenario problem $\text{(SP)}_{\mathcal{N}}$ seeks the optimal solution $x_{\mathcal{N}}^*$ which is feasible for all $N$ scenarios. The scenario problem can be represented more concisely by defining $\mathcal{X}_i:= \{x \in \mathbf{R}^n: f(x, \xi^i) \le 0\}$:
\begin{subequations}
\label{form:scenario_problem_set}
\begin{align}
  \text{(SP)}_{\mathcal{N}}:~\min_{x \in \chi}~&  c^\intercal x  \\
 % & \text{s.t.} && \overline{Q_{CB}^{(i)}} = \frac{1}{T} \sum_{t=1}^{T} Q_{CB}^{(i)}[t]  & &  \\
  \text{s.t.}~&  x \in \cap_{i=1}^N \mathcal{X}_i \label{form:scenario_problem_scenario}
\end{align}
\end{subequations}
\begin{defn}[Violation Probability]
The \emph{violation probability} of a candidate solution $x^\diamond$ is defined as the probability that $x^\diamond$ is infeasible $\mathbb{V}(x^\diamond) := \mathbb{P}_\xi\big( x^\diamond \notin \mathcal{X}_\xi \big)$.
% \begin{equation}
% \mathbb{V}(x^\diamond) := \mathbb{P}_\xi\big( x^\diamond \notin \mathcal{X}_\xi \big).
% \end{equation}
\end{defn}
The scenario approach theory aims at answering the following \emph{sample complexity} question: what is the smallest sample size $N$ such that $x_{\mathcal{N}}^*$ is feasible (i.e. $\mathbb{V}(x_\mathcal{N}^*) \le \epsilon$) to the original chance-constrained program (\ref{form:cc_opt})? Reference \cite{campi_exact_2008} provides some deep results by exploring the structural properties of the scenario problem $\text{SP}_{\mathcal{N}}$.
\begin{defn}[Support Scenario]
\label{defn:support_scenario}
A scenario $\xi^i$ is a \emph{support scenario} for the scenario problem $\text{(SP)}_{\mathcal{N}}$ if its removal changes the solution of $\text{(SP)}_{\mathcal{N}}$. $\mathcal{S}$ denotes the set of support scenarios.
\end{defn}
\begin{defn}[Non-degeneracy \cite{campi_exact_2008}] Let $x_{\mathcal{N}}^*$ and $x_{\mathcal{S}}^*$ stand for the optimal solutions to the scenario problems $\text{SP}_{\mathcal{N}}$ and $\text{SP}_{\mathcal{S}}$, respectively. The scenario problem $\text{SP}_{\mathcal{N}}$ is said to be \emph{non-degenerate}, if $c^\intercal x_{\mathcal{N}}^* = c^\intercal x_{\mathcal{S}}^*$.
\end{defn}
Theorem \ref{thm:exact_feasibility_scenario_approach} presents one of the most important results in the scenario approach theory, which is based on the non-degeneracy, feasibility and convexity assumptions below.
\begin{assumption}[Non-degeneracy\cite{campi_exact_2008,calafiore_random_2010}]
\label{ass:non-degeneracy}
For every $N$, the scenario problem $\text{SP}_{\mathcal{N}}$ is non-degenerate with probability one with respect to scenarios $\mathcal{N} = \{\xi^{1},\xi^{2},\cdots,\xi^{N}\}$.
\end{assumption}
\begin{assumption}[Feasibility and Uniqueness]
\label{asmp:feasibility_uniqueness}
Every scenario problem $\text{(SP)}_{\mathcal{N}}$ is feasible, and its feasibility region has a non-empty interior. Moreover, the optimal solution $x_{\mathcal{N}}^\ast$ of $\text{(SP)}_{\mathcal{N}}$ exists and is unique.
\end{assumption}
\begin{assumption}[Convexity]
\label{assu:convexity}
The deterministic constraint $g(x) \le 0$ is convex, and the random constraint $f(x,\xi)$ is convex in $x$ for every instance of $\xi$. In other words, the sets $\chi$ and $\mathcal{X}_i$s are convex.
\end{assumption}
\begin{thm}[\cite{campi_exact_2008,calafiore_random_2010}]
\label{thm:exact_feasibility_scenario_approach}
Under Assumption \ref{ass:non-degeneracy}, \ref{asmp:feasibility_uniqueness} and \ref{assu:convexity}, for a non-degenerate scenario problem $\text{SP}_{\mathcal{N}}$, it holds that
\begin{equation}
\label{eqn:exact_distribution_violation_prob}
  \mathbb{P}^N \Big( \mathbb{V}(x_N^\ast)  > \epsilon \Big) \le \sum_{i=1}^{n-1} \binom{N}{i} \epsilon^i (1- \epsilon )^{N-i}.
\end{equation}
The probability $\mathbb{P}^N$ is taken with respect to $N$ random scenarios $\mathcal{N} = \{\xi^i\}_{i=1}^N$.
\end{thm}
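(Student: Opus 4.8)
The plan is to reproduce the classical exchangeability argument of Campi and Garatti \cite{campi_exact_2008} (see also Calafiore \cite{calafiore_random_2010}), which rests on three ingredients: a dimension bound on the support scenarios, a symmetry reduction over subsamples, and a distribution-free evaluation of the resulting expectation.

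First, I would show that under Assumptions~\ref{asmp:feasibility_uniqueness} and \ref{assu:convexity} the support set of $\text{SP}_\mathcal{N}$ has at most $n$ elements, i.e.\ $|\mathcal{S}| \le n$ with probability one. This is a Helly-type fact: since $x_\mathcal{N}^*$ is the \emph{unique} minimizer of the linear functional $c^\intercal x$ over the convex body $\chi \cap \big(\cap_{i=1}^N \mathcal{X}_i\big) \subseteq \mathbf{R}^n$, its optimality is certified by a first-order/separation condition involving only the constraint sets active at $x_\mathcal{N}^*$, and a Carath\'eodory argument shows that $n$ of these sets already pin down the same optimizer; the remaining scenarios are therefore non-support. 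Assumption~\ref{assu:convexity} enters essentially here.

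Next, I would exploit symmetry. Since the $\xi^i$ are i.i.d., the scenarios are exchangeable, so --- using Assumption~\ref{ass:non-degeneracy}, which makes the events ``$\mathcal{S}$ equals a prescribed index set'' a partition of the sample space up to a null set and forces $x_\mathcal{N}^* = x_\mathcal{S}^*$ --- every $k$-subset of $\{1,\dots,N\}$ is equally likely to be $\mathcal{S}$, for each $k \le n$. Fix the reference subset $\mathcal{I} = \{1,\dots,k\}$ and let $x_\mathcal{I}^*$ be the optimizer of the scenario problem built from $\mathcal{I}$ alone. On $\{\mathcal{S} = \mathcal{I}\}$ one has $x_\mathcal{N}^* = x_\mathcal{I}^*$, and in particular each of the other $N-k$ scenarios $\xi^{k+1},\dots,\xi^N$ must be feasible at $x_\mathcal{I}^*$; conditioning on $\xi^1,\dots,\xi^k$, i.e.\ on $x_\mathcal{I}^*$ and on $v := \mathbb{V}(x_\mathcal{I}^*)$, that event has conditional probability $(1-v)^{N-k}$, because the remaining scenarios are independent of $\mathcal{I}$. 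Collecting the $\binom{N}{k}$ identical contributions over $k=0,\dots,n$ yields
\[
  \mathbb{P}^N\big(\mathbb{V}(x_\mathcal{N}^*) > \epsilon\big) \;=\; \sum_{k=0}^{n} \binom{N}{k}\,\mathbb{E}\Big[\mathbf{1}\{v>\epsilon\}\,(1-v)^{N-k}\,\mathbf{1}\{\text{all }k\text{ scenarios in }\mathcal{I}\text{ are support scenarios of }\text{SP}_\mathcal{I}\}\Big],
\]
where the inner indicator depends on $\xi^1,\dots,\xi^k$ only.

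Finally, and this is the step I expect to be the main obstacle, one must evaluate the right-hand side and show that it is \emph{distribution-free}. The key lemma is that, for a $k$-scenario convex program of the present type, the violation probability of its unique optimizer, restricted to the event appearing in the indicator above, carries the universal law of the maximum of $k$ i.i.d.\ Uniform$[0,1]$ variables, i.e.\ density $k\,v^{k-1}$ on $[0,1]$; this is proved by a recursion on ``the chance that appending one fresh scenario creates a new support scenario,'' or equivalently by reducing to halfspaces in general position. Substituting this law collapses the $k$-th expectation to $k\int_\epsilon^1 v^{k-1}(1-v)^{N-k}\,dv$, and the elementary identity $k\binom{N}{k}\int_\epsilon^1 v^{k-1}(1-v)^{N-k}\,dv = \sum_{i=0}^{k-1}\binom{N}{i}\epsilon^i(1-\epsilon)^{N-i}$ completes the computation: in the fully-supported regime only the $k=n$ term survives and \eqref{eqn:exact_distribution_violation_prob} holds with equality, while in general, since $|\mathcal{S}|\le n$ and the preceding expression is nondecreasing in $k$, the average over $k$ is dominated by its value at $k=n$, which is exactly the claimed bound \eqref{eqn:exact_distribution_violation_prob}.
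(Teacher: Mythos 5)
The paper offers no proof of Theorem~\ref{thm:exact_feasibility_scenario_approach}; it is imported as-is from \cite{campi_exact_2008,calafiore_random_2010}, so the only meaningful comparison is against the original Campi--Garatti argument. Your skeleton matches it: the Helly-type bound $|\mathcal{S}|\le n$, the exchangeability reduction to a reference subset $\mathcal{I}$ with multiplicity $\binom{N}{k}$, the conditional factor $(1-v)^{N-k}$, and the incomplete-Beta identity are all the right ingredients. (Your sum starting at $i=0$ is in fact the correct form of the bound; the lower limit $i=1$ in \eqref{eqn:exact_distribution_violation_prob} is an indexing slip in the paper, as a comparison with Corollary~\ref{cor:prior_sample_complexity_fully_supported} shows.)

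There are, however, two genuine gaps. First, the ``key lemma'' --- that $v=\mathbb{V}(x_\mathcal{I}^*)$, restricted to the event $A_k$ that all $k$ scenarios of $\mathcal{I}$ are support scenarios of $\text{SP}_\mathcal{I}$, carries the density $k\,v^{k-1}$ --- is the entire analytic content of the theorem, and you only gesture at its proof. ``Reducing to halfspaces in general position'' is not an admissible route: the theorem is distribution-free and must hold for arbitrary, possibly atomic, laws of $\xi$. The actual argument pins down the moments of $1-v$ on $A_k$ from the normalization identities $\sum_{k}\binom{m}{k}\mathbb{E}\big[(1-V_k)^{m-k}\mathbf{1}_{A_k}\big]=1$ applied for \emph{every} sample size $m\ge k$ (set $\epsilon=0$ in your decomposition for each $m$), which is a moment-determination argument, not a computation with uniform order statistics. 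Second, and more seriously, your closing sentence does not work for non-fully-supported problems: the quantities $\mathbb{E}\big[\mathbf{1}\{V_k>\epsilon\}(1-V_k)^{N-k}\mathbf{1}_{A_k}\big]$ with $k<n$ are integrals against \emph{sub}-probability measures whose individual laws are not determined by anything you have established, so you cannot ``substitute the law $kv^{k-1}$'' term by term, and ``the average over $k$ is dominated by its value at $k=n$'' does not follow from the monotonicity of $\sum_{i=0}^{k-1}\binom{N}{i}\epsilon^i(1-\epsilon)^{N-i}$ in $k$. Closing this requires an extremal argument over all families of sub-measures compatible with the full set of moment constraints (Campi--Garatti), or the separate treatment of non-fully-supported and degenerate problems in \cite{calafiore_random_2010}; some such step is unavoidable and is missing from your proposal.
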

A scenario problem $\text{SP}_{\mathcal{N}}$ is \emph{fully-supported} if the number of support scenarios equates the number of decision variables, i.e. $|\mathcal{S}| = n$. The inequality (\ref{eqn:exact_distribution_violation_prob}) is tight for fully-support problems. For non-fully supported problems, if the number of support scenarios is bounded by a known value $h$, i.e. $|\mathcal{S}| \le h < n$, then \cite{calafiore_random_2010} shows that (\ref{eqn:exact_distribution_violation_prob}) could be tightened as
\begin{equation}
\label{eqn:exact_distribution_violation_prob_tightened}
  \mathbb{P}^N \Big( \mathbb{V}(x_N^\ast)  > \epsilon \Big) \le \sum_{i=1}^{h-1} \binom{N}{i} \epsilon^i (1- \epsilon )^{N-i}.
\end{equation}
Based on Theorem \ref{thm:exact_feasibility_scenario_approach}, the scenario approach answers the sample complexity question in Corollary \ref{cor:prior_sample_complexity_fully_supported}.
\begin{cor}[\cite{campi_exact_2008,calafiore_random_2010}]
\label{cor:prior_sample_complexity_fully_supported}
Given a violation probability $\epsilon \in (0,1)$ and a confidence parameter $\beta \in (0,1)$, if we choose the smallest number of scenarios $N$ such that 
\begin{equation}
  \sum_{i=0}^{h-1} \binom{N}{i} \epsilon^i (1- \epsilon)^{N-i} \le \beta,
\end{equation}
then it holds that
\begin{equation}
  \mathbb{P}^N\Big ( \mathbb{V}(x_{\mathcal{N}}^\ast) \le \epsilon \Big) \ge 1 - \beta,
\end{equation}  
where $x_{\mathcal{N}}^\ast$ is the optimal solution to $\text{SP}_{\mathcal{N}}$, and $h$ is the upper bound on the number of support scenarios, i.e. $|\mathcal{S}| \le h \le n$. 
\end{cor}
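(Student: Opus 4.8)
The plan is to obtain the statement as an immediate consequence of the refined feasibility bound (\ref{eqn:exact_distribution_violation_prob_tightened}) (which reduces to (\ref{eqn:exact_distribution_violation_prob}) when $h=n$), so the work is essentially bookkeeping on a binomial tail. First I would note that the quantity appearing in the hypothesis, $\Phi(N) := \sum_{i=0}^{h-1}\binom{N}{i}\epsilon^{i}(1-\epsilon)^{N-i}$, is exactly $\mathbb{P}\big(\mathrm{Bin}(N,\epsilon)\le h-1\big)$, the CDF of a binomial random variable evaluated at the fixed integer $h-1$. Since the mean $N\epsilon$ diverges as $N\to\infty$, we have $\Phi(N)\to 0$; hence the set of $N$ with $\Phi(N)\le\beta$ is nonempty and admits a smallest element, so the prescription ``choose the smallest $N$'' is well posed.

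Next, fix that smallest $N$. Because Assumptions \ref{ass:non-degeneracy}, \ref{asmp:feasibility_uniqueness} and \ref{assu:convexity} hold and the number of support scenarios is bounded by $h\le n$, the inequality (\ref{eqn:exact_distribution_violation_prob_tightened}) applies and gives
\begin{equation}
\mathbb{P}^{N}\Big(\mathbb{V}(x_{\mathcal{N}}^{\ast})>\epsilon\Big)\;\le\;\sum_{i=1}^{h-1}\binom{N}{i}\epsilon^{i}(1-\epsilon)^{N-i}.
\end{equation}
The omitted $i=0$ term equals $(1-\epsilon)^{N}\ge 0$, so the right-hand side is at most $\Phi(N)$, and by the choice of $N$ this is at most $\beta$. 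Passing to the complementary event yields
\begin{equation}
\mathbb{P}^{N}\Big(\mathbb{V}(x_{\mathcal{N}}^{\ast})\le\epsilon\Big)\;=\;1-\mathbb{P}^{N}\Big(\mathbb{V}(x_{\mathcal{N}}^{\ast})>\epsilon\Big)\;\ge\;1-\beta,
\end{equation}
which is the claimed bound.

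The argument has no real obstacle; the only point that deserves care is verifying that the hypotheses needed to invoke (\ref{eqn:exact_distribution_violation_prob_tightened}) are genuinely in force — non-degeneracy, feasibility/uniqueness, convexity, and the a priori upper bound $h$ on $|\mathcal{S}|$ with $h\le n$ — and that the two cases $h<n$ and $h=n$ are both covered (the latter falling back on (\ref{eqn:exact_distribution_violation_prob})). I would state these prerequisites explicitly at the top of the proof. If one prefers not to rely on the abstract existence of the minimizing $N$, the same monotonicity of $\Phi(\cdot)$ lets one exhibit an explicit sufficient sample size by standard binomial-tail estimates, but this is not required for the statement as phrased and I would relegate it to a remark.
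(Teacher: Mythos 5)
Your derivation is correct and follows the standard route that the paper (which cites this corollary to \cite{campi_exact_2008,calafiore_random_2010} without reproving it) implicitly relies on: apply Theorem \ref{thm:exact_feasibility_scenario_approach} in its tightened form (\ref{eqn:exact_distribution_violation_prob_tightened}), observe that adding the nonnegative $i=0$ term only enlarges the right-hand side so the hypothesis $\sum_{i=0}^{h-1}\binom{N}{i}\epsilon^i(1-\epsilon)^{N-i}\le\beta$ suffices, and pass to the complement. The extra remarks on well-posedness of the smallest $N$ and on the $h=n$ versus $h<n$ cases are harmless and accurate.
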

The scenario approach is essentially a randomized algorithm to solve the chance-constrained optimization problem (\ref{form:cc_opt}). The randomness of the scenario approach comes from drawing i.i.d. scenarios. The confidence parameter $\beta$ quantifies the risk of failure due to drawing scenarios from a ``bad'' set. Corollary \ref{cor:prior_sample_complexity_fully_supported} shows that by choosing a proper number of scenarios, the corresponding optimal solution $x_{\mathcal{N}}^*$ will have violation probability less than $\epsilon$ with high confidence $1 - \beta$.

The scenario approach is a very simple yet powerful method. It is particularly attractive due to the \emph{distribution-free} feature. Theorem \ref{thm:exact_feasibility_scenario_approach} (and Corollary \ref{cor:prior_sample_complexity_fully_supported}) holds for any types of distribution. It requires nothing except the i.i.d. drawing of scenarios. 
We further explore the strength of the scenario approach in this paper. In addition to the distribution-free feature, we show that the scenario approach can go beyond the convexity assumption and be applied on non-convex problems in certain circumstances.
\begin{rem}[The Role of Convexity]
\label{rem:convexity}
Most results of the scenario approach, e.g. \cite{campi_exact_2008,calafiore_random_2010}, are built upon the convexity assumption (i.e. Assumption \ref{assu:convexity}). It plays a major role in bounding the number of support scenarios. Because of the convexity assumption \footnote{This originates from the Helly's lemma in convex analysis.}, the number of support scenarios is bounded by the number of decision variables $n$. 
For non-convex problems, the number of support scenarios could be more than $n$, e.g. \cite{campi_general_2018}.
After carefully examining the proofs of Theorem \ref{thm:exact_feasibility_scenario_approach} and Corollary \ref{cor:prior_sample_complexity_fully_supported} in \cite{campi_exact_2008} and \cite{calafiore_random_2010}, however, we would like to point out that bounding the number of support scenarios is indeed \emph{the only role} of the convexity assumption. The remaining parts of the proofs of Theorem \ref{thm:exact_feasibility_scenario_approach} and Corollary \ref{cor:prior_sample_complexity_fully_supported} do \emph{not} rely on the convexity assumption. In other words, if we are able to find $|\mathcal{S}| \le h$ for some non-convex problems satisfying the non-degeneracy assumption \ref{ass:non-degeneracy} \footnote{For non-convex problems, it is likely that $h > n$, which might lead to a large number of scenarios required by the theory. Fortunately, the unit commitment problem is not the case.}, Theorem \ref{thm:exact_feasibility_scenario_approach} and Corollary \ref{cor:prior_sample_complexity_fully_supported} still hold true despite the non-convexity. 
\end{rem}
% subsection scenario_approach (end)
% section introduction_to_chance_constrained_optimization (end)

\section{Deterministic and Chance-constrained Unit Commitment} % (fold)
\label{sec:unit_commitment}
\subsection{Nomenclature} % (fold)

\begin{xtabular}{ll}
\multicolumn{2}{l}{Constants and Parameters} \\
% \multicolumn{2}{l}{Critical pre-defined parameters} \\
$a^k \in \{0,1\}^{n_g}$&  generator availability in contingency $k$ \\
$\alpha_k \in \mathbf{R}_+$ & weight of contingency $k$ \\
$c_g \in \mathbf{R}^{n_g}$ & generation costs \\
$c_z \in \mathbf{R}^{n_g}$ & no load cost \\
$c_r \in \mathbf{R}^{n_g}$ & reserve costs \\
$c_u \in \mathbf{R}^{n_g}, c_v \in \mathbf{R}^{n_g}$ & startup/shutdown cost  \\
$\hat{d}^{t} \in \mathbf{R}^{n_d},\tilde{d}^{t} \in \mathbf{R}^{n_d}$ & load forecast and forecast error (time $t$) \\
$\hat{w}^{t} \in \mathbf{R}^{n_w},\tilde{w}^{t} \in \mathbf{R}^{n_w}$ & wind forecast and forecast error (time $t$) \\
$\underline{g} \in \mathbf{R}^{n_g}, \overline{g} \in \mathbf{R}^{n_g}$ & generation lower and upper bounds \\
$\underline{\gamma} \in \mathbf{R}^{n_g}, \overline{\gamma} \in \mathbf{R}^{n_g}$ & ramping lower and upper bounds \\
$\underline{u}_i \in \mathbf{R}_+, \underline{v}_i \in \mathbf{R}_+$ & minimum on/off time for generator $i$ \\
\multicolumn{2}{l}{Indices} \\
$k \in \{0,1,\cdots,n_k\}$ & contingency index\\
$t \in \{1,2, \cdots,n_t\}$ & time (snapshot) index \\
\multicolumn{2}{l}{Binary Decision Variables (time $t$)} \\
$z^{t} \in \{0,1\}^{n_g}$ & generator on/off states\\
$u^t \in \{0,1\}^{n_g}$ & generator $i$ turned on at $t$ if $u_i^t = 1$ \\
$v^t \in \{0,1\}^{n_g}$ & generator $i$ turned off at $t$ if $v_i^t = 1$ \\
\multicolumn{2}{l}{Continuous Decision Variables (time $t$, contingency $k$)} \\
$g^{t,k} \in \mathbf{R}^{n_g}$ & generation output \\
% $l^{t,k} \in \mathbf{R}^{n_d}$ & load shedding  \\
% $s^{t,k} \in \mathbf{R}^{n_d}$ & wind spillage/curtailment \\
$r^{t} \in \mathbf{R}^{n_g}$ & reserve 
\end{xtabular}

The number of loads, generators, wind farms, contingencies, and snapshots are denoted by $n_d,n_g,n_w,n_k$ and $n_t$, respectively.
% \vspace{-0.3cm}
\subsection{Deterministic Unit Commitment} % (fold)
\label{sub:deterministic_unit_commitment}
The \emph{deterministic} Unit Commitment (d-UC) problem \ref{opt:det-SCUC} seeks optimal commitment and startup/shutdown decisions $(z^t,u^t, v^t)$, generation and reserve schedules $(g^{t,k}, r^t)$. 
% \begin{small}
\begin{subequations}
\label{opt:det-SCUC}
\begin{align}
  \min_{z,u,v,g,r}~ & \sum_{t=1}^{n_t} \Big(  c_z^\intercal z^t + c_u^\intercal u^t + c_v^\intercal v^t + c_r^\intercal r^t + \sum_{k=0}^{n_k} \alpha_k c_g^\intercal g^{t,k} \Big) \label{opt:det-SCUC-obj} \\
\text{s.t.}~& \mathbf{1}^\intercal g^{t,k} + \mathbf{1}^\intercal \hat{w}^t  \ge \mathbf{1}^\intercal \hat{d}^t \label{opt:det-SCUC-balance}\\
% & 0 \le l^{t,k} \le \hat{d}^t \label{opt:det-SCUC-load-shedding} \\
% & 0 \le s^{t,k} \le \hat{w}^t \label{opt:det-SCUC-wind-curtailment} \\
& a^k \circ \underline{\gamma} \le g^{t,k} - g^{t-1,k} \le a^k \circ \overline{\gamma}  \label{opt:det-SCUC-ramp} \\
& a^k \circ (g^{t,0} - r^t) \le g^{t,k} \le a^k \circ (g^{t,0} + r^t) \label{opt:det-SCUC-contingency} \\
& a^k \circ  \underline{g} \circ  z^t \le g^{t,k} \le a^k \circ  \overline{g} \circ  z^t \label{opt:det-SCUC-contgencap-redundant} \\
& \hspace{109pt} k \in [0,n_k], t \in [1,n_t] \nonumber \\
& \underline{g} \circ  z^t \le g^{t,0} \le \overline{g} \circ  z^t \label{opt:det-SCUC-gencap}\\
% & \underline{r} \circ z^t \le r^t \le \overline{r} \circ  z^t \label{opt:det-SCUC-reservecap} \\
& \underline{g} \circ  z^t \le g^{t,0} - r^t \le g^{t,0} + r^t \le \overline{g} \circ  z^t \label{opt:det-SCUC-resgencap}\\
& z^{t-1} - z^t + u^t \ge 0 \label{opt:det-SCUC-startup} \\
& z^t - z^{t-1} + v^t \ge 0 \label{opt:det-SCUC-shutdown} \\
& \hspace{157pt} t \in [1,n_t] \nonumber \\
& z_i^{t} - z_i^{t-1} \le z_i^\iota, ~ i \in [1,n_g] \label{opt:det-SCUC-minon-time} \\
& \hspace{23pt} \iota \in [t+1,\min\{t+\underline{u}_i-1,n_t\}], t \in [2,n_t] \nonumber \\
& z_i^{t-1} - z_i^{t} \le 1- z_i^\iota, ~ i \in [1,n_g] \label{opt:det-SCUC-minoff-time}\\
& \hspace{23pt} \iota \in [t+1,\min\{t+\underline{v}_i-1,n_t\}], t \in [2,n_t] \nonumber
\end{align}
\end{subequations}
% \end{small}
The objective of (\ref{opt:det-SCUC}) is to minimize total operation costs, which include 
no-load costs $c_z^\intercal z^t$, startup costs $c_u^\intercal u^t$, shutdown costs $c_v^\intercal v^t$, generation costs $c_g^\intercal g^{t,k}$ and reserve costs $c_r^\intercal s^t$. Security constraints ensure: enough supply to meet demand (\ref{opt:det-SCUC-balance}), generation levels within ramping limits (\ref{opt:det-SCUC-ramp}) and within capacity within limits (\ref{opt:det-SCUC-contgencap-redundant})-(\ref{opt:det-SCUC-gencap}) in any contingency $k$. Constraints (\ref{opt:det-SCUC-contingency}) and (\ref{opt:det-SCUC-resgencap}) are about the relationship between generation and reserve in any contingency $k$. Constraints (\ref{opt:det-SCUC-startup})-(\ref{opt:det-SCUC-shutdown}) are the logistic constraints about commitment status, startup and shutdown decisions. Minimum on/off time constraints for all generators are in (\ref{opt:det-SCUC-minon-time})-(\ref{opt:det-SCUC-minoff-time}).
It is worth mentioning that constraints (\ref{opt:det-SCUC-ramp})-(\ref{opt:det-SCUC-resgencap}) also guarantee the consistency\footnote{If generator $i$ fails in contingency $k$ (i.e. $a_i^k = 0$), then $g_i^{t,k} = 0, \forall t \in [1,n_t]$. Similarly, if generator $i$ is not committed at time $t$ (i.e. $z_i^t = 0$), then $g_i^{t,k} = 0$, $\forall t \in [1,n_t], k \in [0,n_k]$.}
of generation levels $g^{t,k}$ with commitment decisions $z^t$ and generator availability $a^k$ in contingency $k$. 
\begin{rem}
\label{rem:redundant_constraints}
Constraint (\ref{opt:det-SCUC-contgencap-redundant}) is redundant, since it is implied
% \footnote{When $z_i^t=0$, constraint (\ref{opt:det-SCUC-contgencap-redundant}) requires $g_i^{t,k} = 0$, which is implied by (\ref{opt:det-SCUC-contingency}) and (\ref{opt:det-SCUC-gencap}). Similarly, when generator $i$ is not available in contingency $k$ ( i.e. $a_i^k = 0$),  (\ref{opt:det-SCUC-contingency}) implies constraint (\ref{opt:det-SCUC-contgencap-redundant}) $g_i^{t,k} = 0$. In other cases, constraint (\ref{opt:det-SCUC-contgencap-redundant}) for generator $i$ is equivalent with $\underline{g}_i \le g_i^{t,k} \le \overline{g}_i$, which can be derived from (\ref{opt:det-SCUC-contingency}) and (\ref{opt:det-SCUC-resgencap}).}
by constraints (\ref{opt:det-SCUC-contingency}), (\ref{opt:det-SCUC-gencap}) and (\ref{opt:det-SCUC-resgencap}). When $z_i^t=0$, constraint (\ref{opt:det-SCUC-contgencap-redundant}) requires $g_i^{t,k} = 0$, which is implied by (\ref{opt:det-SCUC-contingency}) and (\ref{opt:det-SCUC-gencap}). Similarly, when generator $i$ is not available in contingency $k$ ($a_i^k = 0$),  (\ref{opt:det-SCUC-contingency}) implies constraint (\ref{opt:det-SCUC-contgencap-redundant}) $g_i^{t,k} = 0$. In other cases, constraint (\ref{opt:det-SCUC-contgencap-redundant}) for generator $i$ is equivalent with $\underline{g}_i \le g_i^{t,k} \le \overline{g}_i$, which can be derived from (\ref{opt:det-SCUC-contingency}) and (\ref{opt:det-SCUC-resgencap}). Constraint (\ref{opt:det-SCUC-contgencap-redundant}) is omitted in the remainder of this paper.
% This leads to the critical observations on the structural properties of cc-SCUC (\ref{opt:cc-SCUC}) in Remark \ref{rem:structural_prop_cc-SCUC}.
\end{rem}

% subsection deterministic_unit_commitment (end)

\subsection{Chance-constrained Unit Commitment} % (fold)
\label{sub:chance_constrained_unit_commitment}
The deterministic Unit Commitment formulation utilizes the expected wind generation and load forecast, it does not take the uncertainties from wind and load into consideration. We propose an improved formulation of d-UC using chance constraints, which guarantee the system security with a tunable level of risk $\epsilon$ with respect to uncertainties.
\begin{subequations}
\label{opt:cc-SCUC}
\begin{align}
  \min_{z,u,v,g,r}~ &  (\ref{opt:det-SCUC-obj}) \nonumber \\
\text{s.t.}~& (\ref{opt:det-SCUC-balance})(\ref{opt:det-SCUC-ramp})(\ref{opt:det-SCUC-contingency})(\ref{opt:det-SCUC-gencap})(\ref{opt:det-SCUC-resgencap})(\ref{opt:det-SCUC-startup})(\ref{opt:det-SCUC-shutdown})(\ref{opt:det-SCUC-minon-time})(\ref{opt:det-SCUC-minoff-time}) \nonumber \\
& \mathbb{P}_{\tilde{w}\times\tilde{d}}\Big( \mathbf{1}^\intercal g^{t,k} + \mathbf{1}^\intercal (\hat{w}^t + \tilde{w}^t) \ge \mathbf{1}^\intercal (\hat{d}^t + \tilde{d}^t ), \nonumber \\
& \hspace{1.5cm} k \in [0,n_k], t \in [1,n_t] \Big ) \ge 1 - \epsilon \label{opt:cc-SCUC-balance-U} 
\end{align}
\end{subequations}
Problem (\ref{opt:cc-SCUC}) is the formulation of chance-constrained Unit Commitment (c-UC). Instead of using expected load $\hat{d}^t$ as in (\ref{opt:det-SCUC}), we consider loads $d^t$ as forecast $\hat{d}^t$ plus a random forecast error $\tilde{d}^t$ (i.e. $d^t = \hat{d}^t + \tilde{d}^t$).

Comparing with d-UC, the only difference of c-UC is the addition of the chance constraint (\ref{opt:cc-SCUC-balance-U}). The chance constraint guarantees there will be enough supply to meet the net demand in any contingency case at any time 
\begin{equation}
\label{opt:inner-constraint-chance}
\mathbf{1}^\intercal g^{t,k} + \mathbf{1}^\intercal (\hat{w}^t + \tilde{w}^t) \ge \mathbf{1}^\intercal (\hat{d}^t + \tilde{d}^t ), k \in [0,n_k], t \in [1,n_t] 
\end{equation}
with probability no less than $1 - \epsilon$.

To reveal the structures of c-UC, we define the sets below:
\begin{subequations}
\begin{align}
\mathcal{B} &:= \big\{ (z,u,v): (\ref{opt:det-SCUC-startup}), (\ref{opt:det-SCUC-shutdown}), (\ref{opt:det-SCUC-minon-time}), (\ref{opt:det-SCUC-minoff-time}) \big\}  \\
\mathcal{C} &:= \big\{ (g,r):  (\ref{opt:det-SCUC-balance}), (\ref{opt:det-SCUC-ramp}), (\ref{opt:det-SCUC-contingency}) \big\}  \\
\mathcal{H} &:= \big\{ (z,g,r): (\ref{opt:det-SCUC-gencap}), (\ref{opt:det-SCUC-resgencap})  \big\} \\
\mathcal{U} &:= \big\{(g): (\ref{opt:inner-constraint-chance})  \big\} 
\end{align}
\end{subequations}
Then c-UC can be succinctly represented as:
\begin{subequations}
\label{opt:c-UC-abstract}
\begin{align}
\min_{z,u,v,g,r}~ & (\ref{opt:det-SCUC-obj}) \nonumber \\
\text{s.t.}~ &  (z,u,v) \in \mathcal{B} \label{opt:c-UC-abstract-B} \\
& (g,r) \in \mathcal{C} \label{opt:c-UC-abstract-C} \\
& (z,g,r) \in \mathcal{H} \label{opt:c-UC-abstract-H} \\
& \mathbb{P}\big ( g \in \mathcal{U}  \big) \ge 1 - \epsilon \label{opt:c-UC-abstract-U}
\end{align}
\end{subequations}
Sets $\mathcal{B}$ and $\mathcal{C}$ stand for the \emph{deterministic} constraints for binary and continuous variables, respectively. 
Set $\mathcal{H}$ represents the hybrid constraints related with both continuous and binary variables. Set $\mathcal{U}$ represents all constraints related with uncertainties.
\begin{rem}
\label{rem:structural_prop_c-UC}
The non-convexity of unit commitment comes from binary variables $(z,u,v)$. Clearly as shown in (\ref{opt:c-UC-abstract}), non-convexity (i.e. set $\mathcal{B}$ and $\mathcal{H}$) only exists in deterministic constraints, and uncertain constraints $\mathcal{U}$ are only related with continuous variables. This observation plays a critical role in analyzing the structural properties of s-UC in Lemma \ref{lem:s_UC_t=1} and Corollary \ref{cor:s_UC_num_support_scenario}.
\end{rem}
% subsection chance_constrained_unit_commitment (end)
% section unit_commitment (end)

\section{Solving c-UC via the Scenario Approach} % (fold)
\label{sec:scenario_based_unit_commitment}
% We first formulate the scenario problem for c-UC in Section \ref{sub:scenario_based_uc}. Section \ref{sub:structural_properties_of_s_uc} analyzes the structural properties of the scenario problem and demonstrate that the scenario approach can be applied on the (non-convex) unit commitment problem.
\subsection{Scenario-based Unit Commitment} % (fold)
\label{sub:scenario_based_uc}
As explained in Section \ref{sub:the_scenario_approach}, the scenario approach reformulates (\ref{form:cc_opt}) to a scenario problem (\ref{form:scenario_problem_set}) using $N$ scenarios. 
For the unit commitment problem, we denote the set of $N$ scenarios as $\mathcal{N} = \{ (\tilde{d}^1,\tilde{w}^1), (\tilde{d}^2,\tilde{w}^2),\cdots, (\tilde{d}^N,\tilde{w}^N) \}$. Each load and wind scenario is a time series of length $n_t$: $\tilde{d}^i= (\tilde{d}^{1,i},\cdots,\tilde{d}^{n_t,i}), ~\tilde{w}^i= (\tilde{w}^{1,i},\cdots,\tilde{w}^{n_t,i})$.
% \begin{eqnarray*}
%   \tilde{d}^i= (\tilde{d}^{1,i},\cdots,\tilde{d}^{n_t,i}), ~\tilde{w}^i= (\tilde{w}^{1,i},\cdots,\tilde{w}^{n_t,i}),
% \end{eqnarray*}
Then we define the set $\mathcal{U}_i$ corresponding to scenario $i$:
% \begin{subequations}
\begin{align}
\mathcal{U}_i & := \big \{g: \mathbf{1}^\intercal g^{t,k} + \mathbf{1}^\intercal (\hat{w}^t + \tilde{w}^{t,i}) \nonumber \\
& \hspace{1.7cm} \ge \mathbf{1}^\intercal (\hat{d}^t + \tilde{d}^{t,i}), t \in [1,n_t],  k \in [0,n_k] \big \} \label{eqn:scenario_set_U_i}
\end{align}
% \end{subequations}
The scenario problem for c-UC can be written as
\begin{subequations}
\label{opt:s-UC}
\begin{align}
\min_{z,u,v,g,r}~ & (\ref{opt:det-SCUC-obj}) \nonumber \\
\text{s.t.}~ & (\ref{opt:c-UC-abstract-B}), (\ref{opt:c-UC-abstract-C}), (\ref{opt:c-UC-abstract-H}) \nonumber \\
& g \in \cap_{i=1}^{N} \mathcal{U}_i \label{opt:s-UC-scenario}
\end{align}
\end{subequations}
Problem (\ref{opt:s-UC}) is referred as s-UC in the remainder of this paper.
% subsection solving_cc_uc_via_the_scenario_approach (end)

\subsection{Structural Properties of s-UC} % (fold)
\label{sub:structural_properties_of_s_uc}
For notation simplicity, we define $\iota^t$ as the index of the scenario with the largest \emph{net} demand forecast error at time $t$:
\begin{eqnarray}
  \iota^t := \arg_{i} \max\Big\{\mathbf{1}^\intercal \tilde{d}^{t,1}-\mathbf{1}^\intercal\tilde{w}^{t,1},\cdots, \mathbf{1}^\intercal \tilde{d}^{t,N}-\mathbf{1}^\intercal\tilde{w}^{t,N} \Big\}, \label{eqn:support_scenario_index_t}
\end{eqnarray}
and define $\overline{\mathcal{S}} := \{\iota^1,\iota^2,\cdots,\iota^{n_t}\}$. Clearly there might be repetitive scenario indices in $\iota^1,\iota^2,\cdots,\iota^{n_t}$, i.e. $|\overline{\mathcal{S}}| \le n_t$.
\begin{lem}
\label{lem:s_UC_t=1}
When $n_t = 1$, s-UC has at most one support scenario. The support scenario is the one with the largest \emph{net} demand forecast error, i.e. $(\tilde{d}^{1,\iota^1},\tilde{w}^{1,\iota^1})$ if the number of support scenarios is not zero.
\end{lem}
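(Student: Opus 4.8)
The plan is to show that when $n_t = 1$, the only scenario whose removal can change the optimal solution of s-UC is $\iota^1$, the scenario achieving the maximum of $\mathbf{1}^\intercal \tilde{d}^{1,i} - \mathbf{1}^\intercal \tilde{w}^{1,i}$ over $i \in \{1,\dots,N\}$. The key observation (Remark~\ref{rem:structural_prop_c-UC}) is that, with $n_t=1$, the uncertain constraints $g \in \cap_{i=1}^N \mathcal{U}_i$ in (\ref{opt:s-UC-scenario}) collapse to a single family of inequalities indexed by $k$: for each contingency $k$,
\begin{equation*}
\mathbf{1}^\intercal g^{1,k} \ge \mathbf{1}^\intercal \hat{d}^1 - \mathbf{1}^\intercal \hat{w}^1 + \big( \mathbf{1}^\intercal \tilde{d}^{1,i} - \mathbf{1}^\intercal \tilde{w}^{1,i} \big), \quad i = 1,\dots,N.
\end{equation*}
For a fixed $k$, the $N$ right-hand sides differ only by the scalar net-error term, so the intersection over $i$ is governed entirely by the largest one, i.e. by $i = \iota^1$. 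Hence $\cap_{i=1}^N \mathcal{U}_i = \mathcal{U}_{\iota^1}$ as sets of $g$, and therefore $\text{s-UC}$ with all $N$ scenarios has exactly the same feasible region — and thus the same optimal solution, unique by Assumption~\ref{asmp:feasibility_uniqueness} — as the scenario problem built from the single scenario $\iota^1$.

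From here I would argue about support scenarios directly from Definition~\ref{defn:support_scenario}. Take any scenario $\xi^j$ with $j \neq \iota^1$. Removing it leaves $\iota^1$ still present, so by the same reduction the feasible region is still $\{(z,u,v)\in\mathcal{B},\ (g,r)\in\mathcal{C},\ (z,g,r)\in\mathcal{H},\ g \in \mathcal{U}_{\iota^1}\}$, unchanged; hence the optimal solution is unchanged and $\xi^j$ is not a support scenario. Therefore the set of support scenarios $\mathcal{S}$ is contained in $\{\iota^1\}$, which gives $|\mathcal{S}| \le 1$. If $|\mathcal{S}| = 1$, the unique support scenario must be $\iota^1$, which is precisely the scenario with the largest net demand forecast error $(\tilde{d}^{1,\iota^1},\tilde{w}^{1,\iota^1})$; if $|\mathcal{S}| = 0$, there is nothing more to say. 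This is exactly the claim of the lemma.

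I expect the main subtlety to be two small points rather than a single hard obstacle. First, one must be careful that it is the \emph{net} quantity $\mathbf{1}^\intercal \tilde{d}^{1,i} - \mathbf{1}^\intercal \tilde{w}^{1,i}$ (not demand and wind separately) that orders the constraints: because only the aggregates $\mathbf{1}^\intercal g^{1,k}$, $\mathbf{1}^\intercal \hat d^1$, $\mathbf{1}^\intercal \hat w^1$ appear, the per-bus structure is irrelevant and the comparison is a scalar one, which is what licenses the collapse to a single index. Second, one should note that the reduction $\cap_i \mathcal{U}_i = \mathcal{U}_{\iota^1}$ is an \emph{equality of feasible sets}, so it is legitimate to invoke uniqueness of the optimum (Assumption~\ref{asmp:feasibility_uniqueness}) to conclude the optimal \emph{solution}, not merely the optimal value, is preserved — this is what Definition~\ref{defn:support_scenario} requires. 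In case $\iota^1$ is attained by several scenarios (ties), any one of them individually is non-support since another realizing the maximum remains; this is consistent with the statement, which only asserts an upper bound of one on $|\mathcal{S}|$ and identifies the support scenario, when it exists, as a maximizer of the net error. No appeal to convexity of $\mathcal{B}$ or $\mathcal{H}$ is needed anywhere, in keeping with Remark~\ref{rem:convexity}.
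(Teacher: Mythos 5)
Your proposal is correct and follows essentially the same route as the paper's proof: both hinge on observing that $\cap_{i=1}^N \mathcal{U}_i = \mathcal{U}_{\iota^1}$ (since the $N$ constraints differ only in the scalar net-error term on the right-hand side), so removing any scenario other than $\iota^1$ leaves the feasible region, hence the optimal solution, unchanged, and Definition~\ref{defn:support_scenario} then rules out every scenario except $\iota^1$ as a support scenario. Your version simply spells out the justification for the set equality, the role of uniqueness, and the tie-breaking case, which the paper leaves as ``clearly.''
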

\begin{proof}
Let $\iota^1$ be the scenario index defined in (\ref{eqn:support_scenario_index_t}), clearly $\mathcal{U}_{\iota^1} = \cap_{i=1}^{N} \mathcal{U}_i$, which implies that the removal of any scenario other than $\iota^1$ will not change the feasible region.
 % thus will not change the optimal solution. 
According to Definition \ref{defn:support_scenario}, all other scenarios except $\iota^1$ cannot be a support scenario. Therefore s-UC with $n_t=1$ has at most one support scenario.
\end{proof}
\begin{cor}
\label{cor:s_UC_num_support_scenario}
For s-UC (\ref{opt:s-UC}), let $\mathcal{S}$ denote the set of its support scenarios, then $\mathcal{S} \subseteq \overline{\mathcal{S}}$, which indicates $|\mathcal{S}| \le |\overline{\mathcal{S}}| \le n_t$.
\end{cor}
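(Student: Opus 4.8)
The plan is to generalize the single-snapshot argument of Lemma~\ref{lem:s_UC_t=1} to the multi-snapshot case by showing that, for the chance-constraint part of s-UC, only the ``extremal'' scenarios at each time step can matter. Concretely, I would first observe that the constraint set $\cap_{i=1}^N \mathcal{U}_i$ decouples across time in a mild sense: for each fixed $t$ and each contingency $k$, the only thing scenario $i$ contributes through $\mathcal{U}_i$ is the scalar lower bound $\mathbf{1}^\intercal(\hat d^t + \tilde d^{t,i}) - \mathbf{1}^\intercal(\hat w^t + \tilde w^{t,i})$ on $\mathbf{1}^\intercal g^{t,k}$, and the binding bound among all $N$ scenarios is exactly the one achieved by $\iota^t$ as defined in~(\ref{eqn:support_scenario_index_t}). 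Hence I would establish the set identity $\cap_{i=1}^N \mathcal{U}_i = \cap_{t=1}^{n_t} \mathcal{U}_{\iota^t} = \cap_{i \in \overline{\mathcal S}} \mathcal{U}_i$, arguing coordinate-by-coordinate in $t$ (and noting the bound is the same for every $k$, so intersecting over $k$ changes nothing).

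Given that identity, the rest is immediate from the definition of a support scenario. If $\xi^j$ is a scenario with $j \notin \overline{\mathcal S}$, then removing it from $\mathcal{N}$ does not change $\cap_{i=1}^N \mathcal{U}_i$ (since that intersection already equals $\cap_{i \in \overline{\mathcal S}} \mathcal{U}_i$ and $j$ is not among those indices), and it certainly does not affect the deterministic constraints (\ref{opt:c-UC-abstract-B}), (\ref{opt:c-UC-abstract-C}), (\ref{opt:c-UC-abstract-H}), which do not involve scenarios at all. Therefore the feasible region of s-UC is unchanged, so the optimal solution is unchanged, so by Definition~\ref{defn:support_scenario} $\xi^j$ is not a support scenario. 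Taking the contrapositive, every support scenario lies in $\overline{\mathcal S}$, i.e.\ $\mathcal{S} \subseteq \overline{\mathcal S}$, and the cardinality bound $|\mathcal S| \le |\overline{\mathcal S}| \le n_t$ follows since $\overline{\mathcal S}$ is the image of $\{1,\dots,n_t\}$ under $t \mapsto \iota^t$.

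I would also flag the one subtlety worth a sentence in the write-up: the argument above uses that the feasible region depends on the scenario set \emph{only} through $\cap_{i=1}^N \mathcal{U}_i$ together with the scenario-free deterministic sets — this is precisely the structural observation recorded in Remark~\ref{rem:structural_prop_c-UC}, namely that uncertainty enters only through the continuous variable $g$ and only through $\mathcal{U}$, while all non-convexity sits in scenario-independent constraints. Combined with Remark~\ref{rem:convexity}, which says bounding $|\mathcal S|$ is the only place convexity was used, this is exactly what licenses applying Theorem~\ref{thm:exact_feasibility_scenario_approach} and Corollary~\ref{cor:prior_sample_complexity_fully_supported} to s-UC with $h = n_t$ despite the binary variables.

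The main obstacle, such as it is, is being careful with the definition of $\iota^t$ when there are ties in~(\ref{eqn:support_scenario_index_t}): the $\arg\max$ may not be unique, so $\iota^t$ is really ``some maximizing index,'' and I should make sure the set identity $\cap_{i=1}^N \mathcal{U}_i = \cap_{t} \mathcal{U}_{\iota^t}$ holds for any valid choice (it does, since any maximizer gives the same binding scalar bound). A second, even milder point is the degenerate case where removing a non-extremal scenario could in principle still alter a \emph{non-unique} optimizer — but Assumption~\ref{asmp:feasibility_uniqueness} rules this out, so I can safely phrase things in terms of ``the solution.'' Neither point requires real work; the proof is essentially a one-line reduction to Lemma~\ref{lem:s_UC_t=1} applied snapshot by snapshot.
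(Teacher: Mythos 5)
Your proposal is correct and follows essentially the same route as the paper: the paper decomposes $\mathcal{U}_i = \mathcal{U}_i^1 \times \cdots \times \mathcal{U}_i^{n_t}$ and applies Lemma~\ref{lem:s_UC_t=1} snapshot by snapshot, which is exactly your set identity $\cap_{i=1}^N \mathcal{U}_i = \cap_{i\in\overline{\mathcal{S}}}\mathcal{U}_i$ followed by the definition of a support scenario. Your added care about ties in the $\arg\max$ and the role of Assumption~\ref{asmp:feasibility_uniqueness} is a welcome refinement but does not change the argument.
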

\begin{proof}
Let $\mathcal{U}_i^t = \big \{g^t: \mathbf{1}^\intercal g^{t,k} + \mathbf{1}^\intercal (\hat{w}^t + \tilde{w}^{t,i}) \ge \mathbf{1}^\intercal (\hat{d}^t + \tilde{d}^{t,i}),  k \in [0,n_k] \big \}$, then $\mathcal{U}_i = \mathcal{U}_i^1 \times \mathcal{U}_i^2 \cdots \times \mathcal{U}_i^{n_t}$. According to Lemma \ref{lem:s_UC_t=1}, $\cap_{i=1}^N \mathcal{U}_i^t$ has at most one support scenario, which is indexed by $\iota^{t}$. Applying Lemma \ref{lem:s_UC_t=1} for all $n_t$ snapshots, we can see that the set $\overline{\mathcal{S}}$ contains all candidates to support scenarios, thus $\mathcal{S} \subseteq \overline{\mathcal{S}}$ and $|\mathcal{S}| \le |\overline{\mathcal{S}}| \le n_t$.
\end{proof}

The intuition behind Lemma \ref{lem:s_UC_t=1} and Corollary \ref{cor:s_UC_num_support_scenario} is illustrated in Fig. \ref{fig:case3-degenerate-example-UC}. Fig. \ref{fig:case3-degenerate-example-UC} visualizes the constraints and feasible region $(g_1, g_2)$ of the 2-generator, 3-bus and 3-line system in \cite{geng_general_2019}. Four blue regions ($\mathcal{B}_0,\mathcal{B}_1,\mathcal{B}_2,\mathcal{B}_3$) stand for four possible on/off states of 2 generators. For example, $\mathcal{B}_2$ shows the case in which generator 1 is off ($z_1 = 0$) and generator 2 is on ($z_2 = 1$). The black solid lines represent the determine constraints $\mathcal{C}$. Three dashed/dotted lines denote three constraints ($\mathcal{U}_1,\mathcal{U}_2,\mathcal{U}_3$) of three scenarios. Since the scenario constraint \eqref{eqn:scenario_set_U_i} is only about supply and demand (transmission limits are not included), the feasible region of s-UC is clearly defined by the scenario with the \emph{largest net demand} ($\mathcal{U}_1$ in Fig. \ref{fig:case3-degenerate-example-UC}), which is the support scenario of s-UC. The scenario with the largest net demand at each snapshot is a candidate for support scenarios (Lemma \ref{lem:s_UC_t=1}), therefore there are at most $n_t$ candidates for support scenarios (Corollary \ref{cor:s_UC_num_support_scenario}).  

\begin{figure*}[htbp]
  \centering
  \subfloat[Constraints $\mathcal{B}_0,\mathcal{B}_1,\mathcal{B}_2,\mathcal{B}_3$ representing 4 possible states of 2 generators.]{\includegraphics[width=0.5\linewidth]{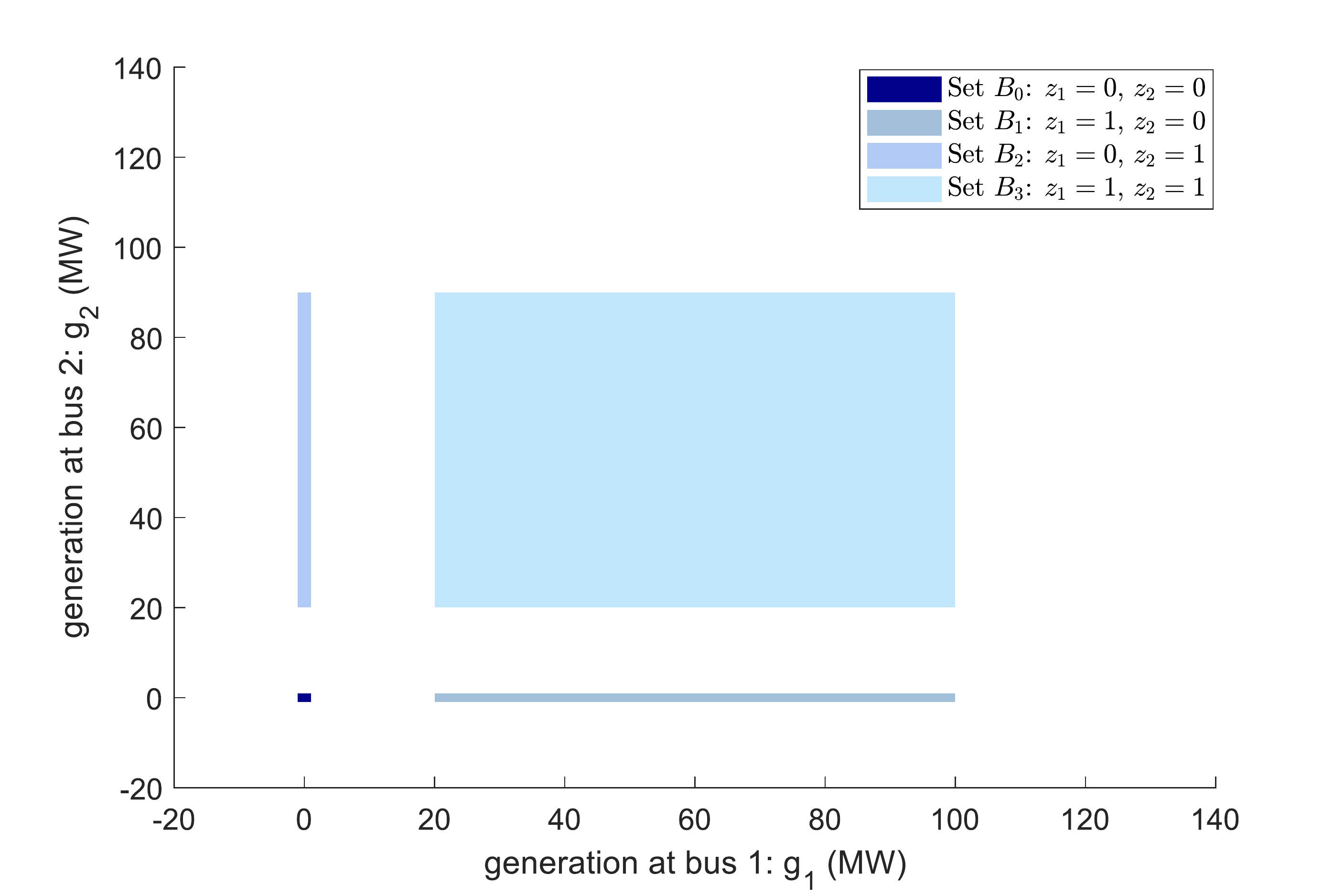}\label{fig:case3-degenerate-example-no}} 
  \subfloat[Deterministic constraints.]{\includegraphics[width=0.5\linewidth]{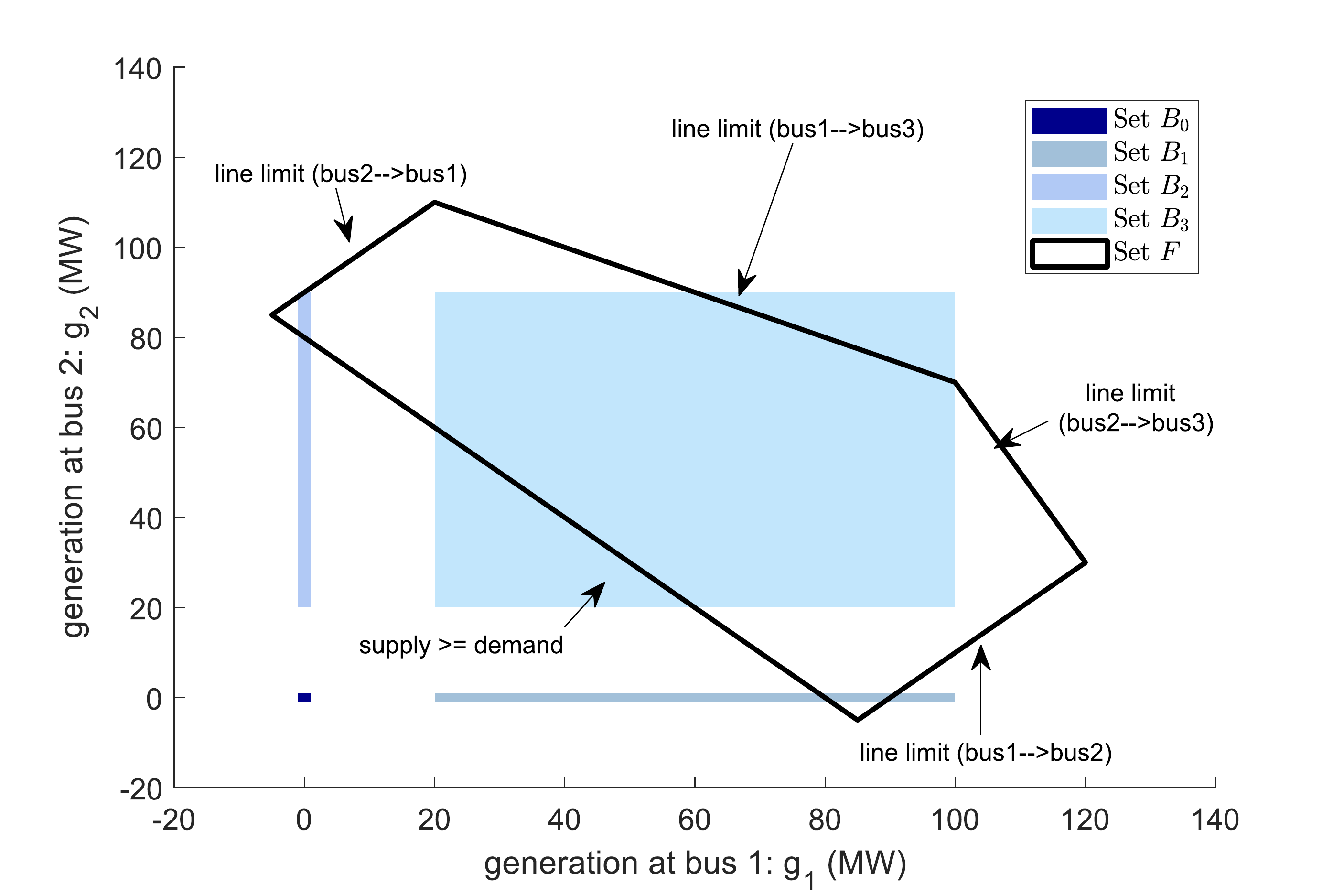}\label{fig:case3-degenerate-example-det}} \\
  \subfloat[Scenario constraints $\mathcal{U}$, with all scenarios ($\mathcal{U}_1$,$\mathcal{U}_2$,$\mathcal{U}_3$).]{\includegraphics[width=0.5\linewidth]{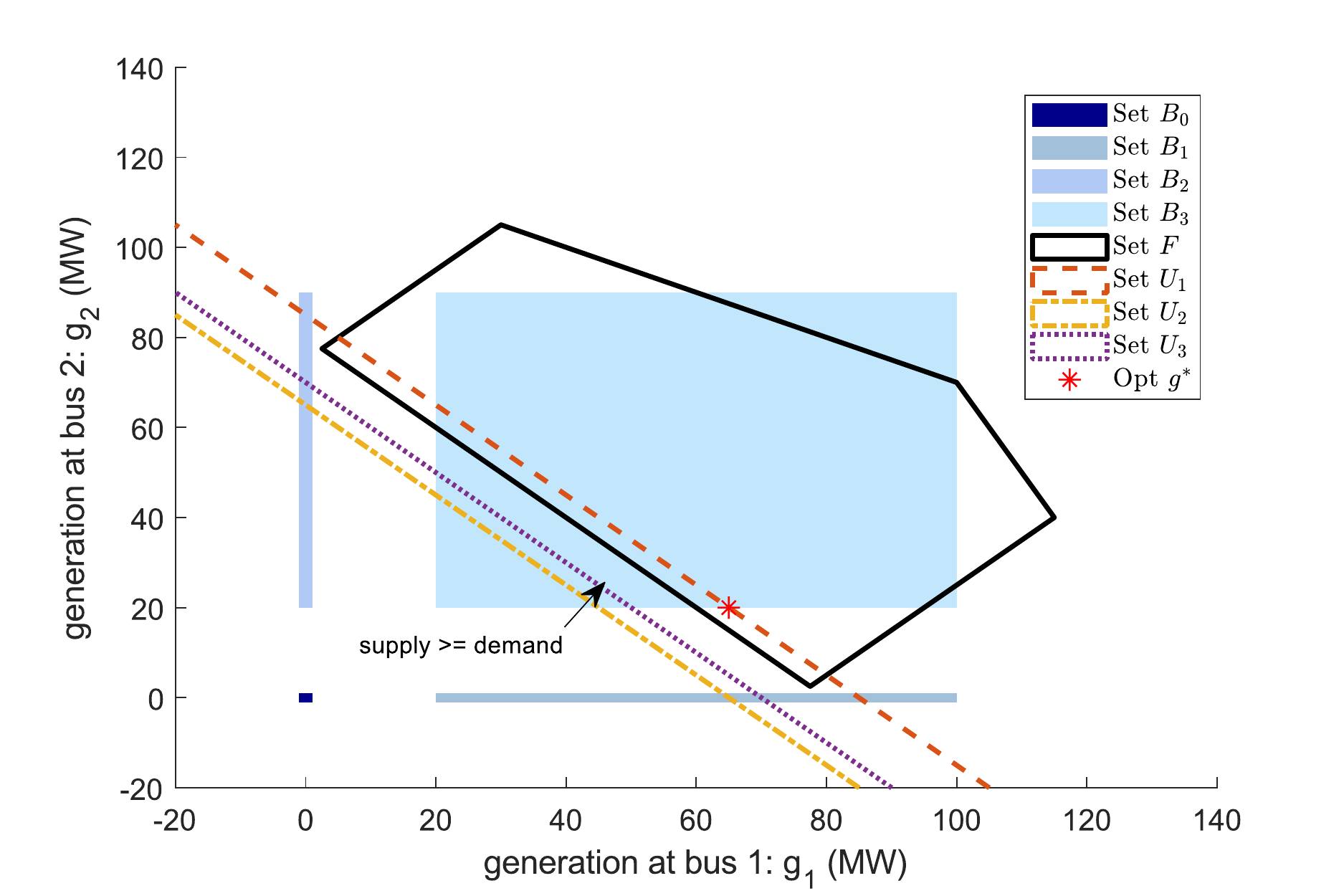}\label{fig:case3-degenerate-example-all-UC}} 
  \subfloat[Scenario constraints $\mathcal{U}$, with only support scenario ($\mathcal{U}_1$).]{\includegraphics[width=0.5\linewidth]{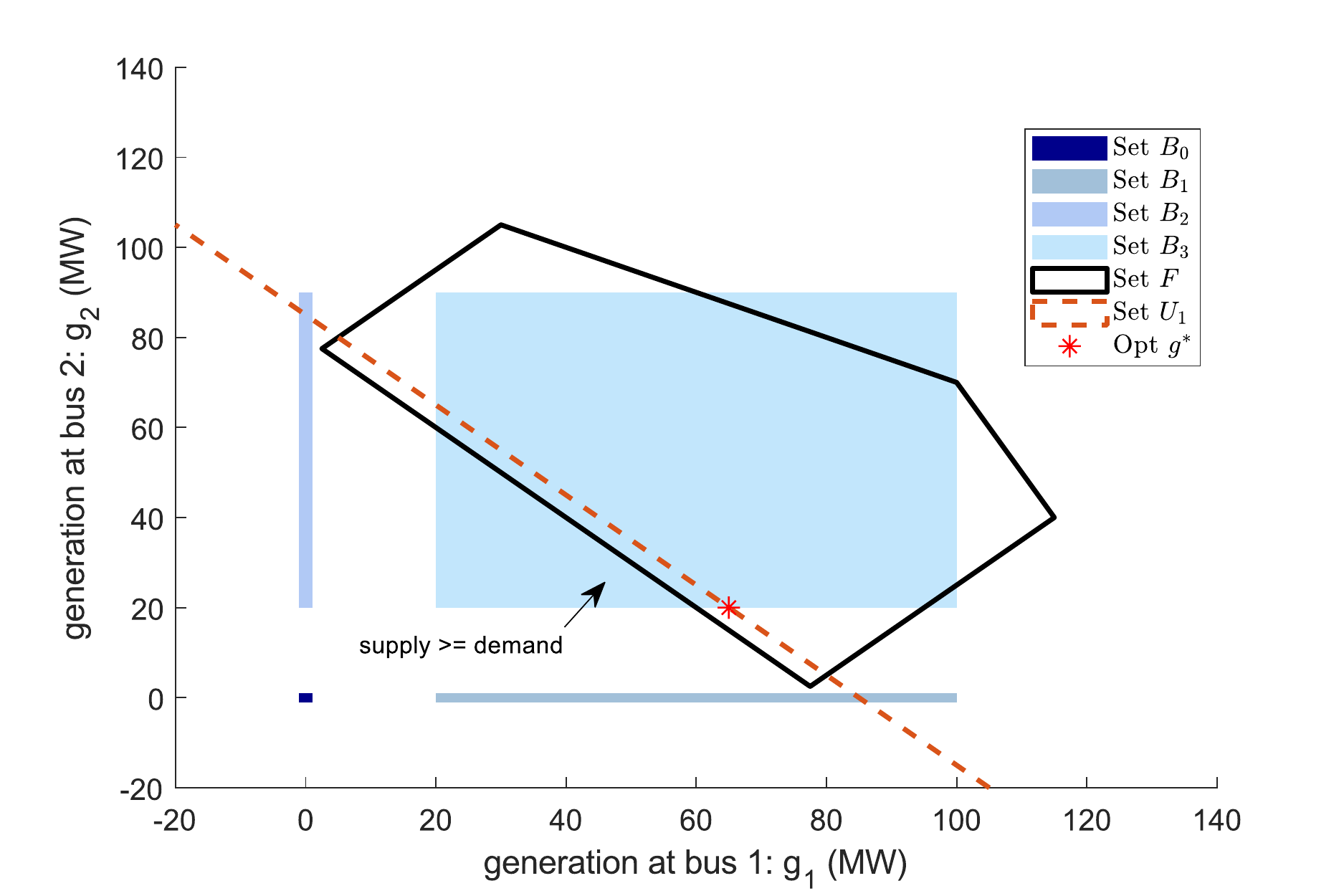}\label{fig:case3-degenerate-example-all-UC}} 
  \caption{Illustration of Lemma \ref{lem:s_UC_t=1} and Corollary \ref{cor:s_UC_num_support_scenario} using the 2-generator, 3-bus system in \cite{geng_general_2019}.}
  \label{fig:case3-degenerate-example-UC}
\end{figure*}

% subsection structural_properties_of_sb_uc (end)
\subsection{Sample Complexity for s-UC} % (fold)
\label{sub:sample_complexity_for_s_uc}
Corollary \ref{cor:s_UC_num_support_scenario} shows that $|\mathcal{S}| \le n_t$ for s-UC, then we can use the results in Corollary \ref{cor:prior_sample_complexity_fully_supported} and Remark \ref{rem:convexity} to calculate the number of scenarios to achieve the desired security level $1-\epsilon$ with confidence $1-\beta$. Table \ref{tab:samplex_complexity_case118} presents the sample complexity (number of scenarios) needed with various $\epsilon$ levels for the 118-bus system in Section \ref{sub:settings_of_the_118_bus_system}. 

Although unit commitment is non-convex because of the binary variables $(z,u,v)$. It is in general difficult to estimate the number of support scenarios $|\mathcal{S}|$ a-priori. Without exploiting the structural properties of s-UC as in Corollary \ref{cor:s_UC_num_support_scenario}, the best bound \footnote{Before revealing the structure of s-UC in Corollary \ref{cor:s_UC_num_support_scenario}, $n$ is not an upper bound on $|\mathcal{S}|$ because s-UC is non-convex. But $n$ is the best bound we could hope for using the results in Theorem \ref{thm:exact_feasibility_scenario_approach} and Corollary \ref{cor:prior_sample_complexity_fully_supported}.} might be the number of decision variables $|\mathcal{S}| \le n$, which is $4n_g n_t + n_gn_tn_k = 75168$ for the 118-bus system .
Table \ref{tab:samplex_complexity_case118} also presents the sample complexity using $|\mathcal{S}| \le 75168$. As shown in Table \ref{tab:samplex_complexity_case118}, Corollary \ref{cor:s_UC_num_support_scenario} \emph{greatly} reduces the number of scenarios from some astronomical numbers in the case of $|\mathcal{S}| \le 75168$. Another attractive observation is that the results in Corollary \ref{cor:s_UC_num_support_scenario} holds regardless of the system size.
% \vspace{-0.6cm}

\begin{table*}[htbp]
  \caption{Sample Complexity for s-UC (Case118, $\beta = 10^{-4}$)}
  \label{tab:samplex_complexity_case118}
  \centering

  \begin{tabular}{l|ccccccccc}
  \hline

  \hline
  Violation probability $\epsilon$ & 0.3 & 0.2 & 0.1 & 0.075 & 0.05 & 0.025 & 0.01 \\
  \hline
  Sample Complexity $N$ (when $|\mathcal{S}| \le 24$) & 143 & 221 &   455     &    610    &     921     &   1853        & 4650     \\
  Sample Complexity $N$ (when $|\mathcal{S}| \le 75168$)  & 253416 & 380419 & 761394 & 1015370 & 1523320 & 3047161 & 7618678 \\
  \hline

  \hline
  \end{tabular}
\end{table*}

% subsection sample_complexity_for_s_uc (end)
% section scenario_based_unit_commitment (end)
% \vspace{-0.6cm}
\section{Case Study} % (fold)
\label{sec:case_study}
\subsection{Settings of the 118-bus System} % (fold)
\label{sub:settings_of_the_118_bus_system}
We solve the unit commitment problem of an 118-bus system with $54$ generators ($n_g=54$) in 24 hours ($n_t =24$) under 54 possible generator failure contingencies ($n_k=54$). The test system is a modified version of the 118-bus system in \cite{pena_extended_2018}. The modified 118-bus system includes 5 wind farms at different locations.
% The settings of the wind farms are based on \cite{pena_extended_2018}. Details are shown in Table \ref{tab:wind_farm_settings} and Figure \ref{fig:wind_profiles}.

% \begin{table}[htbp]
%   \caption{Settings of the Wind Farms}
%   \label{tab:wind_farm_settings}
%   \centering

%   \begin{tabular}{l|ccccc}
%   \hline

%   \hline
%   \textbf{Bus} & 24 & 27 & 31 & 100 & 82 \\
%   \hline
%   \textbf{Capacity} & & \\
%   \hline

%   \hline
%   \end{tabular}
% \end{table}
% \begin{figure}[htbp]
%   \centering
%   % \includegraphics[width=\linewidth]{wind_profiles}
%   \caption{Wind Profiles}
%   \label{fig:wind_profiles}
% \end{figure}

The numerical simulation was conducted on a desktop with Intel Core i7-2600 CPU@3.40GHz and 16GB of memory.
Matpower and YALMIP were used to formulate the c-UC problem in MatlabR2018a. The c-UC problem was converted to s-UC via ConvertChanceConstraint in \cite{geng_data-driven_2019-2}, then solved using Gurobi 8.10 till the MIP gap is smaller than $0.01\%$.
% subsection settings_of_the_118_bus_system (end)

\subsection{Numerical Results} % (fold)
\label{sub:numerical_results}
We solve the s-UC problem with different number of scenarios $N$. Given $N$, we conduct 10 independent Monte-Carlo simulations to examine the randomness of the scenario approach. Another independent test dataset of $10^4$ points was used to evaluate the out-of-sample violation probability $\epsilon$ of the solution to s-UC.

Figure \ref{fig:sc-UC-case118iit-fig_objeps_2axis} demonstrates the optimal objective values and out-of-sample $\epsilon$ with different number of scenarios. As the scenario approach theory suggests, with an increasing number of scenarios, the system risk level $\epsilon$ decreases. Figure \ref{fig:sc-UC-case118iit-fig_objeps_2axis} also shows that with $0.96\%$ of cost increase (from $1.356 \times 10^{6}$ to $1.369 \times 10^{6}$), the system risk $\epsilon$ is reduced from $19\%$ to $2\%$. 
\begin{figure}[htbp]
  \centering
  \includegraphics[width=\linewidth]{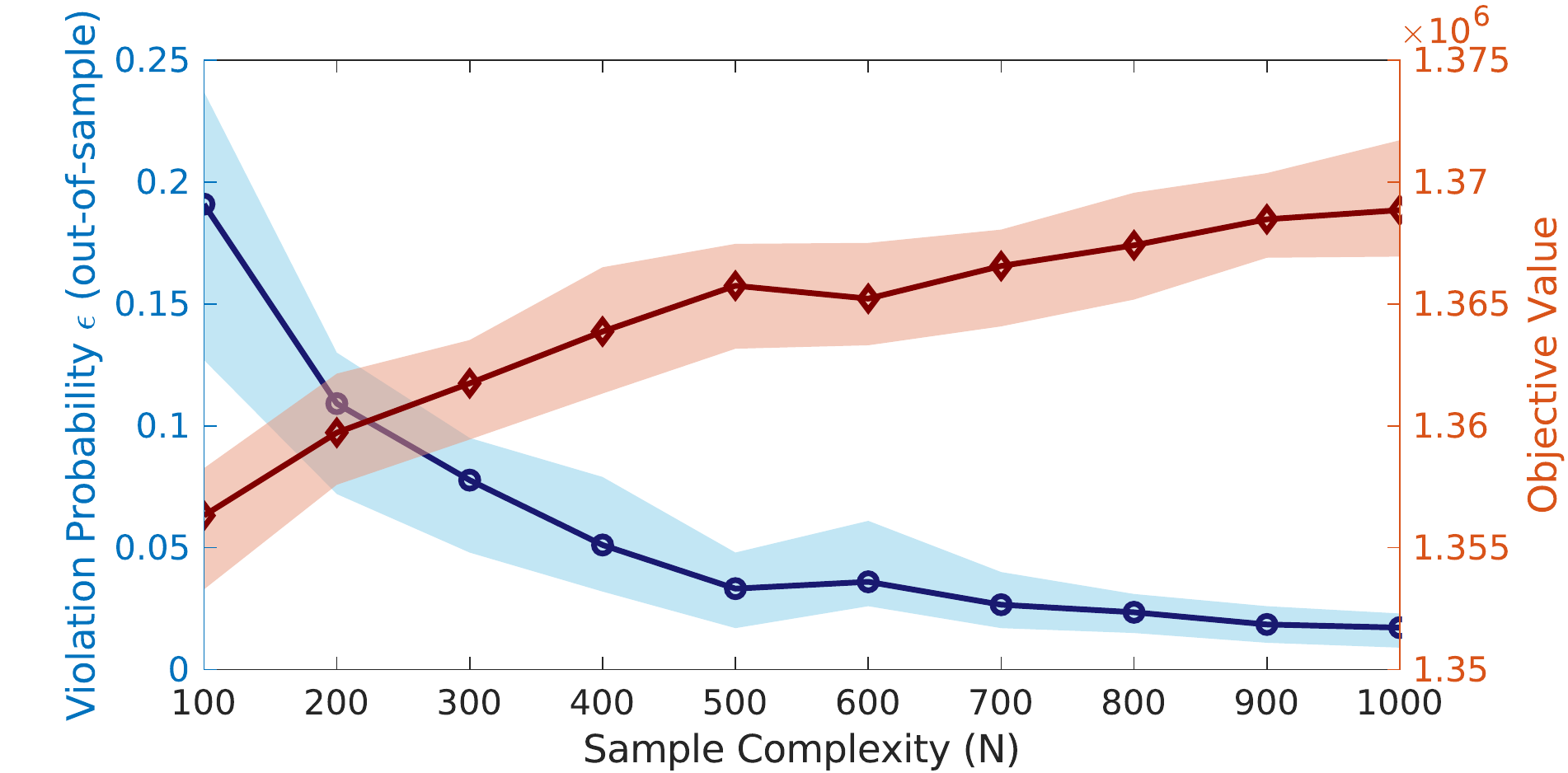}
  \caption{Key Results of s-UC with Different Sample Complexity}
  \label{fig:sc-UC-case118iit-fig_objeps_2axis}
\end{figure}

Figure \ref{fig:sc-UC-case118iit-epsilon-epsilon} plots two violation probabilities. The blue solid curve illustrates the average empirical $\epsilon$ (evaluated on the test dataset of $10^4$ points), the shaded area shows the largest and smallest violation probabilities in 10 Monte-Carlo runs. The dotted green lines plots the guaranteed $\epsilon$ by combining Theorem \ref{thm:exact_feasibility_scenario_approach} with Corollary \ref{cor:s_UC_num_support_scenario}. Figure \ref{fig:sc-UC-case118iit-epsilon-epsilon} shows that the scenario approach is applicable on the unit commitment problem, despite its non-convexity. Furthermore, Figure \ref{fig:sc-UC-case118iit-epsilon-epsilon} also demonstrates the value of Corollary \ref{cor:s_UC_num_support_scenario}. Without showing that $|\mathcal{S}| \le n_t$ as in Corollary \ref{cor:s_UC_num_support_scenario}, Theorem \ref{thm:exact_feasibility_scenario_approach} is only able to provide useless guarantees (e.g. $\epsilon \le 0.999999$ when using $1000$ scenarios). 
\begin{figure}[htbp]
  \centering
  \includegraphics[width=\linewidth]{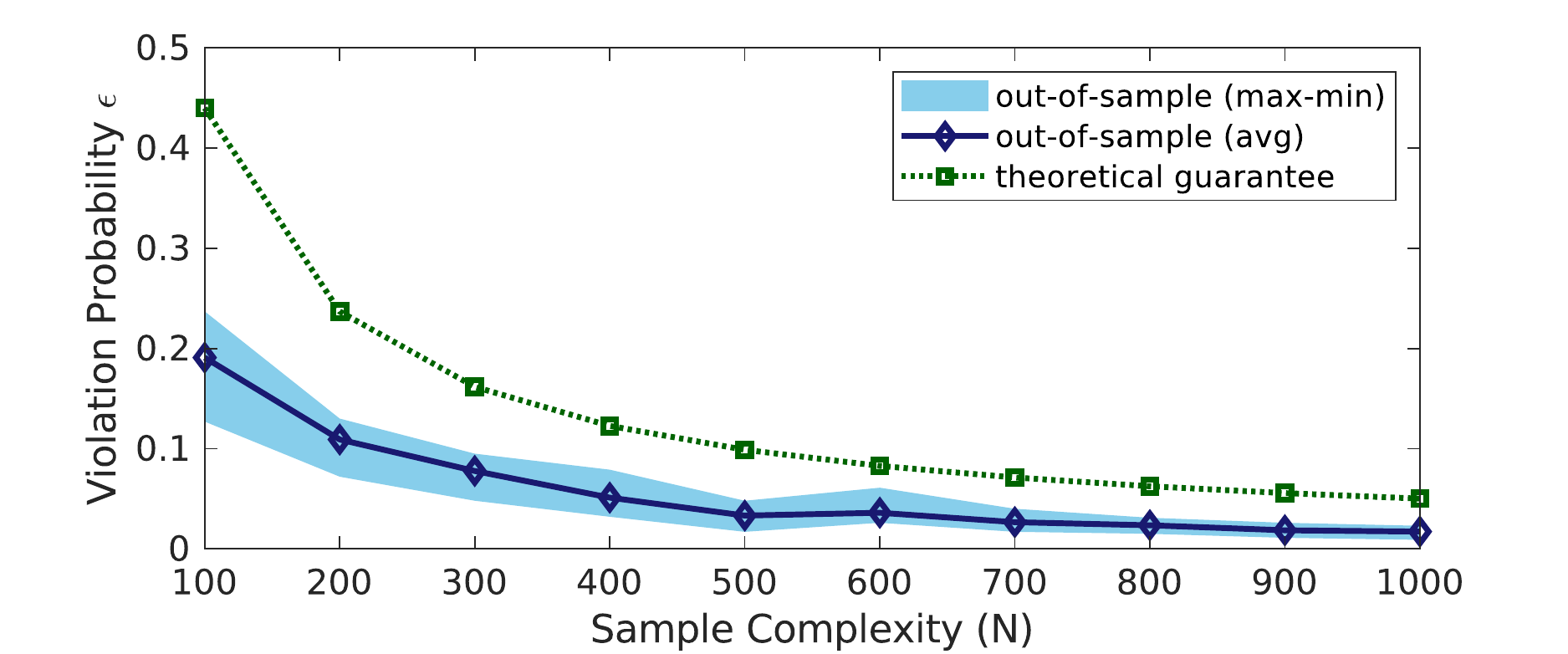}
  \caption{Theoretical and Empirical Violation Probabilities $\epsilon$}
  \label{fig:sc-UC-case118iit-epsilon-epsilon}
\end{figure}

Due to the non-convexity from the binary decision variables, the scenario approach was considered not applicable on the unit commitment problem previously. 
One main contribution of this paper is to show the potential of the scenario approach on non-convex problems like unit commitment. By exploring the structural properties of s-UC, Section \ref{sec:scenario_based_unit_commitment} shows that the scenario approach could still provide rigorous guarantees on the quality of solutions, as in the convex case. This is all based on Lemma \ref{lem:s_UC_t=1} and Corollary \ref{cor:s_UC_num_support_scenario}. Table \ref{tab:number_support_scenarios} shows the maximum and minimum number of support scenarios in 10 Monte Carlo runs of each given sample complexity $N$. This verifies the correctness of Corollary \ref{cor:s_UC_num_support_scenario}.

\begin{table*}[tb]
  \caption{Number of Support Scenarios}
  \label{tab:number_support_scenarios}
  \centering

  \begin{tabular}{l|cccccccccc}
  \hline

  \hline
  $N$ & 100 & 200 & 300 & 400 & 500 & 600 & 700 & 800 & 900 & 1000 \\
  \hline
  $|\mathcal{S}|$ (min) & 19 & 21 & 22 & 22 & 22 & 22 & 23 & 23 & 23 & 24  \\
  $|\mathcal{S}|$ (max) & 24 & 24 & 24 & 24 & 24 & 24 & 24 & 24 & 24 & 24 \\
  % $N$ & 100 & 200 & 300 & $\cdots$ & 600 & 700 & 800 & 900 & 1000 \\
  % \hline
  % $|\mathcal{S}|$ (min) & 19 & 21 & 22 & $\cdots$ & 22 & 23 & 23 & 23 & 24  \\
  % $|\mathcal{S}|$ (max) & 24 & 24 & 24 & $\cdots$ & 24 & 24 & 24 & 24 & 24 \\  
  \hline

  \hline
  \end{tabular}
\end{table*}

% subsection numerical_results (end)

\subsection{Scenario Reduction} % (fold)
\label{sub:scenario_reduction}
When the desired risk level $\epsilon$ is very small, the scenario approach might require a large number of scenarios. This will directly cause memory and computation issues in numerical simulations. Corollary \ref{cor:s_UC_num_support_scenario} turns out to be quite helpful in improving the computational performance. Corollary \ref{cor:s_UC_num_support_scenario} shows that a majority of the scenarios have no impacts on the final solution and thus can be reduced. Then s-UC only needs to be solved with at most $n_t=24$ scenarios, which can be easily identified as mentioned in Section \ref{sub:structural_properties_of_s_uc}. 
We compare the results of using 1000 scenarios with those of using identified 24 (out of 1000) scenarios. Although the optimal solution is slightly different due to a few identical generators, the difference in the objective value is less than $10^{-6}$.

% subsection scenario_reduction (end). 

% \subsection{Comparison with An Alternative Approach} % (fold)
% \label{sub:comparison_with_an_alternative_approach}

% % subsection comparison_with_an_alternative_approach (end)

\subsection{Adding Security Constraints} % (fold)
\label{sub:adding_security_constraints}
The main limitation of this paper is not considering possible security constraints such as transmission line limits. The nice results in Corollary \ref{cor:s_UC_num_support_scenario} holds only in the absence of a transmission network.
We also applied the scenario approach on chance-constrained SCUC. Numerical results show that the number of support scenarios could be more than $n_t = 24$, but this number does not increase too much (e.g. $30\sim 50$ for the 118-bus system with 186 lines). However, we are yet not able to prove nice results as in Corollary \ref{cor:s_UC_num_support_scenario}. This is one critical part of our ongoing works and beyond the scope of this paper.

% subsection adding_security_constraints (end)

% section numerical_results (end)

\section{Concluding Remarks} % (fold)
\label{sec:concluding_remarks}
This paper is a first step towards a practical and rigorous day-ahead decision making framework in uncertain environments. We formulate the chance-constrained unit commitment problem and solve it via the scenario approach. We show that the number of support scenarios in the unit commitment problem is at most $n_t$. This structural property makes the scenario approach applicable in the presence of non-convexity. It substantially reduces the necessary number of scenarios and could be further exploited to reduce the computational requirement to solve the problem. Future work will extend the results towards security-constrained unit commitment. 
% section conluding_remarks (end)

\bibliographystyle{IEEEtran}
\bibliography{myreferences}

\end{document}